\pgfplotsset{compat=1.17}
\newtheorem{remark}{Remark}
\newtheorem{example}{Example}
\newtheorem{definition}{Definition}[section]
\newtheorem{prop}{Proposition}[section]
\newtheorem{lem}{Lemma}[section]
\newtheorem{thm}{Theorem}[section]
\newtheorem{rem}{Remark}[section]
\def\PP{\mathcal{P}}
\def\HH{\mathcal{H}}
\def\pk{\texttt{pk}}
\def\seed{\texttt{seed}}
\def\index{\texttt{index}}
\def\MM{\mathcal{M}}
\def\cumsum{\texttt{cumsum}}
\def\ww{\vec{w}}
\def\ss{\vec{s}}
\def\SS{\vec{S}}
\def\AA{\mathcal{A}}
\def\Alg{\texttt{Alg}}
\def\committee{\mathcal{M}}
\def\MM{\committee}
\def\bbN{\mathbb{N}}
\def\bbR{\mathbb{R}}
\newcommand{\aps}[1]{\vert #1 \vert}
\newcommand{\norm}[1]{\| #1 \|}
\def\E{\mathbb{E}}
\def\P{\mathbb{P}}
\def\dynamicProgramming{\texttt{D}}
\newif\ifcomment
\newcommand{\np}[1]{\textcolor{blue}{#1 - Nikita}}
\newcommand{\seb}[1]{\textcolor{orange}{#1 - Seb}}
\renewcommand{\added}{}
\newcommand{\ion}[2]{%
  #1$\;$%
  \if b\expandafter\@car\f@series\relax\@nil
    \begingroup 
      \sbox0{\rmfamily\mdseries\textsc{v}}%
      \resizebox{!}{\ht0}{\rmfamily\@Roman{#2}}%
    \endgroup
  \else
    \textsc{\rmfamily\@roman{#2}}%
  \fi
}
\def\@author#1{\g@addto@macro\elsauthors{\normalsize%
    \def\baselinestretch{1}%
    \upshape\authorsep#1\unskip\textsuperscript{%
      \ifx\@fnmark\@empty\else\unskip\sep\@fnmark\let\sep=,\fi
      \ifx\@corref\@empty\else\unskip\sep\@corref\let\sep=,\fi
      }%
    \def\authorsep{\unskip,\space}%
    \global\let\@fnmark\@empty
    \global\let\@corref\@empty  
    \global\let\sep\@empty}%
    \@eadauthor={#1}
}
\journal{Blockchain: Research and Applications}
\begin{document}

\begin{frontmatter}



\title{
Deterministic Bounds in Committee Selection: Enhancing Decentralization and Scalability in Distributed Ledgers}




\author[B4B]{Grigorii Melnikov}
\ead{g.melnikov@b4b.world}
\affiliation[B4B]{organization={B4B.World},
            country={Montenegro}}

\author[AMU]{Sebastian Mueller}
\ead{sebastian.muller@univ-amu.fr}
\affiliation[AMU]{organization={Aix Marseille Universit{é}, CNRS, Marseille},
            country={France}}

\author[IF]{Nikita Polyanskii}
\ead{{Corresponding author*}{nikita.polyansky@gmail.com}}
\affiliation[IF]{organization={IOTA Foundation},
            city={Berlin},
            country={Germany}}

\author[Sk,HSE]{Yury Yanovich}
\ead{y.yanovich@skoltech.ru}
\affiliation[Sk]{organization={Skolkovo Institute of Science and Technology},
            city={Moscow},
            country={Russia}}
\affiliation[HSE]{organization={HSE University}, city={Moscow},
            country={Russia}}

\begin{abstract}
Consensus plays a crucial role in distributed ledger systems, impacting both scalability and decentralization. Many blockchain systems use a weighted lottery based on a scarce resource such as a stake, storage, memory, or computing power to select a committee whose members drive the consensus and are responsible for adding new information to the ledger. Therefore, ensuring a robust and fair committee selection process is essential for maintaining security, efficiency, and decentralization. 

There are two main approaches to randomized committee selection. In one approach, each validator candidate locally checks whether they are elected to the committee and reveals their proof during the consensus phase. In contrast, in the second approach, a sortition algorithm decides a fixed-sized committee that is globally verified. This paper focuses on the later approach, with cryptographic sortition as a method for fair committee selection that guarantees a constant committee size. Our goal is to develop deterministic guarantees that strengthen decentralization. We introduce novel methods that provide deterministic bounds on the influence of adversaries within the committee, as evidenced by numerical experiments. This approach overcomes the limitations of existing protocols that only offer probabilistic guarantees, often providing large committees that are impractical for many quorum-based applications like atomic broadcast and randomness beacon protocols.

\end{abstract}

\begin{keyword}
Blockchain \sep Cryptographic Sortition \sep Committee \sep Decentralization \sep Fairness
\end{keyword}

\end{frontmatter}

\section{Introduction}\label{sec:intro}



The quest for an effective consensus mechanism, a cornerstone of distributed ledger technologies (DLTs), remains a critical challenge in this rapidly evolving landscape \cite{Capossele2021,He2021}. Consensus protocols, which ensure agreement on the ledger's state across all participants, are pivotal to the functioning of DLTs. However, as networks scale, balancing scalability, security, and decentralization involves significant trade-offs \cite{Guo2022}.
One practical avenue to address scalability is the delegation of validation tasks to a smaller, representative committee drawn from the network. \added{This approach is employed by systems such as Ethereum and Cardano~\cite{Pavloff2023,Kiayias2017,Kusmierz2021}.} This approach is designed to boost efficiency, while the methods for selecting such committees often involve weighted lotteries or cryptographic sortition, tailored to maximize decentralization. 

Through the execution of simple computations and verifications involving the ledger's records, all network participants can agree on the composition of the elected committee~\cite{gilad2017algorand}. The fundamental concept behind this solution is that a relatively small committee enables the network to reach a consensus more efficiently, thereby reducing communication costs and latency. \added{Our current research focuses on committee selection, leaving the voting process within the committee beyond the scope of this paper.}

There are various approaches to committee selection that can classified as follows. Let us assume that we have $N$ participants, each with a certain weight $w_i$, representing some scarce resources. A locally verifiable approach is similar to a lottery, where every participant creates their own lot whose input is a globally verifiable seed, for instance by dint of verifiable random functions (VRFs)~\cite{micali1999verifiable}. If this lot is a ``winner'' (e.g., the VRF output is above a certain threshold), the participant becomes a committee member. The advantage of this approach is that only local information is needed and the events of different participants entering the committee are independent. However, a negative aspect is that the committee size is not constant, and we cannot guarantee that a certain weight is present in the committee. Moreover, in the case of rotating committees, this does not allow as to speak of fixed seats, a property that might be desirable for the smooth transitions between different committees. 

In contrast, the (weighted) round-robin selection mechanism offers simplicity and predictability. Participants take turns in a fixed order, ensuring every node becomes part of the committee proportional to their weight $w_i$. While this method ensures a predictable and fair distribution of participation over time, it is vulnerable to targeted attacks due to its predictability. \added{Additionally, member selection is interdependent and thus less likely to be locally verifiable.}

In this paper, we aim to combine the advantages of these two classes and build a random sortition that allows a fair distribution over time, a fixed committee size, and develop deterministic guarantees that enhance decentralization and resilience against adversarial influence.
Consequently, our protocols are not locally verifiable, but instead, the whole committee is clear to every participant who knows global randomness.

\added{This study makes several contributions to fixed-size committee selection--sortition, including:}
\begin{enumerate}
    \item 
        \added{A formulation of fairness and decentralization properties of the procedure. Fairness is a binary requirement for the algorithms, while decentralization is a quantitative property, where a higher value of decentralization is preferable.}
    \item 
        \added{A proposition that sortition with certain decentralization provides deterministic guarantees for an honest majority of the chosen committee. Recall that existing sortition methods don't usually provide any deterministic decentralization properties and, thus, only offer probabilistic guarantees for an honest majority.}
    \item
        \added{A proposal of four fair sortition algorithms with a comparative analysis of their decentralization.}
\end{enumerate}
\added{We challenge researchers to design an algorithm that achieves optimal decentralization.}

The remainder of the paper is organized as follows. Section~\ref{sec: related work} discusses existing sortition protocols, focusing on the notion of fairness in the context of committee selection. Section~\ref{sec: model} introduces notation and two key definitions of fairness and decentralization. In Section~\ref{sec: proposed algorithms}, we present four novel sortition algorithms that achieve fairness and different levels of decentralization.
Finally, Section~\ref{sec: experiments} provides an experimental evaluation of the proposed algorithms.

\section{Related Work}\label{sec: related work}
The fundamental principle of sortition, a political system that dates back to the Athenian democracy of the 5th century BC, involves the random selection of individuals. These individuals form a sortition panel tasked with making decisions on behalf of the broader population from which they were chosen. This ancient practice was based on the idea that randomness could prevent corruption and ensure a fair representation of the population.
In the modern era, the principle of sortition has transcended its political origins and found practical applications in various fields, including distributed ledger technologies (DLTs). As DLTs, such as blockchain systems, grapple with challenges like scalability, security, and decentralization, the concept of sortition is used to create efficient and more scalable consensus mechanisms.

\subsection{Sortition in blockchain protocols}
Algorand's blockchain protocol~\cite{gilad2017algorand} pioneers the use of cryptographic sortition for committee selection, a significant innovation in the field. This method utilizes a Verifiable Random Function (VRF), detailed in~\cite{micali1999verifiable}, for randomly selecting users for block proposals and voting, which is contingent on their stake. In each round, the algorithm assigns to a user holding stake \( m \in \mathbb{N} \) a sample from a binomial random variable \( B(m, \tau) \). This binomial then determines the number of ``seats'' of the participant or its weight in the committee.
The parameter \( \tau \)  regulates the expected committee size or the committee weight, which is equal to $\tau$ times the total sum of the weights. Note that every participant gets a binomial number of seats in the committee with expectation proportional to its stake. Thus, splitting or merging among different identities does not influence the number of seats in the committee, or equivalently, the sortition protocol is fair. 

Filecoin~\cite{wang2023security, filecoin_spec} adopts a similar approach for block proposer selection. The key distinction lies in the utilization of pledged storage instead of stake for determining weights in the lottery and the employment of Poisson random variables, instead of binomial ones, to model the weight of block proposers. Specifically, a participant with weight $m$ gets weight $Pois(m \tau)$ when proposing a block (with $\tau$ being a global constant). This design choice is driven by the aim of simplifying implementation while retaining the robustness against merging and splitting. 

The block proposer selection in Cardano's Ouroboros algorithm~\cite{david2018ouroboros,badertscher2018ouroboros} bears resemblance to these methods. Here, a participant with a share $\alpha$ of stake is elected to propose a block in a slot with probability $1-(1-\tau)^{\alpha}$ (with $0<\tau<1$ being a global parameter). While this approach does not offer the fairness characteristic of Algorand or Filecoin, it provides a weaker version of the robustness property: the likelihood of at least one block proposer emerging is independent of how the stake is distributed among the participants and equal to~$\tau$. 

In Ethereum 2.0~\cite{Gasper}, the committee selection process is an integral part of its Proof-of-Stake consensus mechanism. Time is divided into slots and epochs. There are several roles and committees that are distributed in a random and verifiable way. The approach is different from the “lottery”-based proposal. At the start of each epoch,  validators are shuffled using a permutation function based on a randomness seed. This shuffling determines which validators will be grouped into committees for the upcoming epoch. The probability of being on a committee does not depend on the amount of staked tokens; every elected member has the same influence. Fairness comes from the restriction of allowing only validators with the same weight of 32 ETH  effective stake.

\subsection{Other Fairness Notions in Sortition}
The paper~\cite{amoussou2019fairness} discusses the fairness of sortition in the context of reward distributions, i.e., a blockchain protocol is called \textit{fair} if any honest node that has a fraction $\alpha$ of the total weight in the system will get
at least $\alpha$ fraction of the total reward that is given in the system (over the infinite execution of the system). Fairness is shown to be achievable only if the network model is synchronous or eventually synchronous and Byzantine (faulty or malicious) processes are identifiable.

The research~\cite{flanigan2020neutralizing} addresses the challenge of self-selection bias in sortition panels, where only a fraction of invited individuals are willing to participate, and those who participate may not be representative of the population. The proposed algorithm aims for end-to-end fairness, ensuring all members of the population appear on the panel with close-to-equal probability while also satisfying quotas for demographic representation. This work emphasizes the importance of fairness at both individual and group levels, accounting for intersectional groups that might be overlooked in traditional quota systems.

\subsection{Other Committee Selection Procedures}
There is a big class of voting-based committee selection procedures employed in many blockchains such as EOS, Cosmos Hub, Polkadot, Tron, Binance Smart Chain, Tezos, ICON, Lisk, BitShares, SUI. In such a procedure, each participant gives their vote to a subset of candidates. Then by properly evaluating votes, a committee is formed by taking the set of the most approved candidates. See more details on such voting-based models in~\cite{benhaim2023scaling}.
\section{Model and Properties}\label{sec: model}
\subsection{Notations}

We consider a set of $N$ participants, where the $i$th participant has a positive weight $w_i$. Denote $\vec{w} = (w_1, \dots, w_N)^T$ and assume that weights are normalized, i.e. $\sum_{n=1}^N w_i = 1$.

A sortition is a committee selection algorithm that takes as input $\vec{w}$ and outputs a random subset of participants with assigned voting power. In this paper, we fix the desired size $M$ of a committee and our proposed sortition algorithms output $M$-sized committees. The randomness in a sortition algorithm comes from a publicly known random $seed$.\footnote{The way how randomness is chosen and used in an algorithm is out of the scope of the paper.}  In the following, we write $\AA$ to denote a sortition algorithm. The selected committee is written as $\MM= \{ m_1, m_2, \ldots, m_{M} \}$ and is interpreted as a random variable on the set of all subsets of $[N]$ of size $M$.
We write $\P$ and $\E$ for the underlying probability space and its expectation. 
The voting power of the $i$th committee member is denoted as $g(m_i; \MM)$.  While we provide sortition algorithms that first finds $\MM$ and then independently assign the weight $g_{n}$ to every participant $n$, we assume that the actual voting power $g(m_i; \MM)$ is normalized $\sum_{m \in \MM} g(m_i; \MM) = 1$, i.e.,  $g(m_i; \MM) = \frac{g_{m_i}}{\sum_{j=1}^M g_{m_j}}$ for $i \in [M]$, and $g(n; \MM) = 0$ for $n \not\in \MM$. We denote the vector of $g_n$'s as $\vec{g} = (g_1, \dots, g_N)^T$. 

\subsection{Desired Properties}
The most fundamental property of any committee selection procedure is a fair distribution of voting power.

\begin{definition}[Fairness]
A sortition algorithm is called \textbf{fair}, if the expected voting power of every participant is equal to its initial weight, i.e., for all $N\in \mathbb{N}$ and any given weight vector $\vec{w}$, it holds for all $n\in [N]$
\begin{eqnarray*}
    \E \left[ g(n; \MM) \right] = w_n.
\end{eqnarray*}
\end{definition}

The above fairness ensures that a participant's representation in the committee is equal to their initial weight on average. This fairness is not only a fundamental property but also crucial for maintaining stability within the system. It ensures that there is no incentive to manipulate voting power by splitting or merging weights.

Since the fairness definition depends only on expectations, it is not guaranteed for any realization of the committee it holds $g(n; \MM) = w_n$. Therefore, we introduce an additional property, that holds true when no participant possesses a voting power exceeding $1/\lambda$ times their weight, where $\lambda$ is a parameter. This constraint holds deterministically, ensuring that the voting power remains proportionally bounded in every committee.

\begin{definition}[Decentralization]
\label{def: decentralization}
A sortition algorithm is called $\lambda$-\textbf{decentralized} for $\lambda > 0$ if, for every committee realization $\MM$, the voting power of every committee member can not be larger than its initial weight multiplied by $1/\lambda$:
\begin{eqnarray*}
   \max_{i\in[M]} \frac{g(i; \MM)}{w_{i}} \le \frac{1}{\lambda}.
\end{eqnarray*}
\end{definition}
\begin{example}
     In the special case of equal initial weights $w_n=1/N$, and a sortition algorithm that assigns voting power $g_{m_i}=1/M$ for a random $M$-sized subset of participants, the decentralization equals $\lambda=\frac{1/N}{1/M}=\frac{M}{N}$. We observe that a larger committee size increases the decentralization, reaching its maximum of $\lambda=1$ for $M=N$. In the other extreme case, $M=1$, the decentralization is minimal with $\lambda=1/N.$
\end{example}

While many existing sortition algorithms provide \textit{probabilistic} guarantees for the honest majority of the committee, the above {decentralization} property gives \textit{deterministic} guarantees in case of a small initial weight of the adversary. 

\begin{prop}\label{prop:hornestMajority}
For any $\lambda$-decentralized sortition algorithm, 
the adversary can not get a majority of the total voting power if it initially controls at most  $\frac{\lambda}{2}$ of the initial weight.
\end{prop}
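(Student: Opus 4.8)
The plan is to exploit the fact that the $\lambda$-decentralization bound holds \emph{deterministically} for every committee realization, so that bounding the adversary's aggregate voting power reduces to summing the per-member inequalities. First I would let $A \subseteq [N]$ denote the set of participants controlled by the adversary, with initial weight $\sum_{i \in A} w_i \le \lambda/2$ by hypothesis, and fix an arbitrary committee realization $\MM$. The adversary's share of the total voting power in that realization is $\sum_{i \in A \cap \MM} g(i; \MM)$, and I want to show this is at most $1/2$.

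Next I would apply the decentralization property (Definition~\ref{def: decentralization}) term by term: for each committee member $i \in \MM$ we have $g(i; \MM) \le w_i/\lambda$. Summing over the adversarial members $i \in A \cap \MM$, and then enlarging the index set from $A \cap \MM$ to all of $A$ (which only increases the right-hand side, as the weights are positive), yields
\[
\sum_{i \in A \cap \MM} g(i; \MM) \;\le\; \frac{1}{\lambda} \sum_{i \in A \cap \MM} w_i \;\le\; \frac{1}{\lambda} \sum_{i \in A} w_i \;\le\; \frac{1}{\lambda} \cdot \frac{\lambda}{2} \;=\; \frac{1}{2}.
\]
Since the total voting power is normalized to $1$, the adversary holds at most half of it in every realization and can therefore never obtain a strict majority, which is the claim.

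I do not expect a genuine obstacle here: the whole content lies in the deterministic, per-realization nature of the decentralization guarantee, which is precisely what lets us dispense with any probabilistic or concentration argument and argue uniformly over all realizations of $\MM$. The only points deserving care are bookkeeping ones: noting that a strict majority means a share \emph{exceeding} $1/2$, so the non-strict bound $\le 1/2$ is exactly what rules it out, and observing that the threshold $\lambda/2$ is tight, since equality can be approached when the adversary's members each realize the maximal voting power $w_i/\lambda$ and exhaust the full budget $\sum_{i\in A} w_i = \lambda/2$.
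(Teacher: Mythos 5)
Your proof is correct and is essentially the paper's own argument, just with the bookkeeping made explicit: the paper's one-line proof ("the adversary with at most $\frac{\lambda}{2}$ of the initial weight could get at most $\frac{\lambda}{2}\cdot\lambda^{-1}=1/2$ of the voting power") is precisely your per-member application of Definition~\ref{def: decentralization} summed over the adversarial committee members. Nothing further is needed.
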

\begin{proof}
By Def.~\ref{def: decentralization}, the adversary with at most $\frac{\lambda}{2}$ of the initial weight could get at most  $\frac{\lambda}{2} \cdot \lambda^{-1} = 1/2$ of the voting power.
\end{proof}

The larger the value of $\lambda$, the more advantageous it becomes for the adversary to have any chance of achieving a majority in the resulting committee.


\section{Proposed Algorithms}
\label{sec: proposed algorithms}
In this section, we present four algorithms, all of which provide fairness, but guarantee different levels of decentralization.

\subsection{Algorithm \texttt{Stitch}}
We start with the most basic fair committee selection procedure that provides a committee of fixed size $M$. The idea of the proposed Alg.~\ref{algorithm:Stitch}, called \texttt{Stitch}, is to assign each participant a distinct interval in the interval $[0, 1]$ according to its initial weight and create a \textit{stitch} with a random starting point and fixed $1/M$-length intervals. The committee $\MM$ is formed by taking participants whose intervals intersect with the stitch. If the initial weight of each participant is bounded by $1/M$, then the committee will not contain any duplicating participants. The optional permutation in line 1 of Alg.~\ref{algorithm:Stitch} is not necessary for the property discussed in this paper. However, it does ensure a certain ``independence'' in the committee's composition. 

\begin{algorithm}[t]
\caption{\texttt{Stitch}}
\begin{algorithmic}[1]
\label{algorithm:Stitch}
\REQUIRE $\vec{w}$, $M$
\ENSURE $\vec{g}$, $\MM$
\STATE (Opt.) Apply a random permutation of the set of participants to $\vec{w}=(w_1,\ldots, w_N)$.
\STATE Divide the $[0,1]$-interval into subintervals of length $w_i$. 
\STATE Sample a point $x$ with uniform distribution in the $[0,1]$-interval.
\STATE Form the committee $\MM$ by taking all members whose intervals contain one of the points of the form $\{x+i/M\}$ with $i\in [M]$, where $\{\cdot \}$ denotes the fractional part.
\STATE Set $g_i\gets 1$ for $i \in \MM$ and $g_i \gets 0$ for $i \not\in \MM$.
\RETURN $\vec{g}$, $\MM$
\end{algorithmic}
\end{algorithm}

\begin{prop}\label{prop:fairnessStitch}
Let the weight vector $\vec{w}=(w_1,\ldots,w_{N})$ satisfy $\max\limits_{n \in [N]} w_n < 1/M$, where $M$ is the desired committee size. Then, the sortition algorithm \texttt{Stitch} is fair, and the committee is of size $M$.
\end{prop}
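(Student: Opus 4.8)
The plan is to establish the two claims separately, both hinging on the geometric fact that the $M$ stitch points are equally spaced at gaps of exactly $1/M$ around the circle $[0,1)$, so that an interval strictly shorter than $1/M$ can meet at most one of them. First I would fix notation: writing $I_n$ for the subinterval assigned to participant $n$ in line~2, the collection $\{I_n\}_{n\in[N]}$ tiles $[0,1)$ with $|I_n| = w_n$ (using $\sum_n w_n = 1$), and the stitch consists of the points $p_i := \{x + i/M\}$ for $i \in [M]$. Since $\{x+M/M\}=\{x\}$, these are the $M$ distinct points $\{x\},\{x+1/M\},\dots,\{x+(M-1)/M\}$, spaced exactly $1/M$ apart, so any two distinct stitch points are at circular distance at least $1/M$.

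For the size claim, the key observation is that because $\max_n w_n < 1/M$, no interval $I_n$ can contain two of the points $p_i$: two distinct stitch points are at circular distance at least $1/M$, whereas $|I_n| < 1/M$. Since the intervals tile $[0,1)$, each point $p_i$ lies in exactly one interval, and by the previous sentence distinct points lie in distinct intervals; hence the $M$ points select exactly $M$ distinct participants, giving $|\MM| = M$. A stitch point landing on an interval endpoint is a measure-zero event that can be resolved by a half-open convention, so this conclusion holds for every realization outside that null set.

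For fairness, I would first use the size claim to simplify the voting power. Since every realized committee has exactly $M$ members, each with $g_{m_i}=1$, the normalization yields $g(n;\MM) = \tfrac{1}{M}\mathbf{1}[n\in\MM]$, so that $\E[g(n;\MM)] = \tfrac{1}{M}\,\P[n\in\MM]$. It then remains to show $\P[n\in\MM] = M w_n$. Participant $n$ is selected iff some $p_i$ falls in $I_n$. For each fixed $i$, the map $x \mapsto \{x+i/M\}$ is measure-preserving on the circle, so $\P[p_i \in I_n] = |I_n| = w_n$. The decisive point is that these $M$ events are pairwise disjoint, again because $I_n$ cannot contain two stitch points when $w_n < 1/M$; therefore $\P[n\in\MM] = \sum_{i\in[M]} \P[p_i\in I_n] = M w_n$, and $\E[g(n;\MM)] = w_n$ follows.

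I expect the main obstacle to be making the mutual-exclusivity argument on the circle fully rigorous: one must be precise that the circular distance of at least $1/M$ between distinct stitch points, combined with the \emph{strict} inequality $w_n < 1/M$, rules out two points in a single interval, and it is this one fact that simultaneously drives the deterministic size guarantee and the additivity used in the fairness computation. The only remaining care is the measure-theoretic bookkeeping of the endpoint null set, which leaves the expectation unchanged.
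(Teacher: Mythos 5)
Your proposal is correct and follows essentially the same argument as the paper's proof: both rest on the equal $1/M$ spacing of the stitch points, the strict bound $w_n < 1/M$ ruling out any participant being hit twice, and the computation that each shifted point lands in participant $n$'s interval with probability $w_n$, giving $\E[g(n;\MM)] = M w_n \cdot \tfrac1M = w_n$. Your version merely packages the same facts more explicitly (deterministic size claim first, then disjointness of the $M$ selection events), where the paper phrases it as an expected number of points falling in the interval.
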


\begin{proof}
The initial point \(x\) is chosen uniformly from the interval \([0, 1]\). The sequence \(\{x + i/M\}\) for \(i \in [M]\) cycles through the interval \([0, 1]\).

We consider a specific weight interval of length \(w_n\) for participant \(n\) and suppose it has the form \([a, a+w_n]\) within \([0, 1]\). 
While \(x\) is uniformly distributed, the subsequent points are uniformly spread at a fixed distance of \(1/M\) apart from each other, starting from \(x\). Thus, the interval \([a, a+w_n]\) is intersected by the points \(\{x + i/M\}\) based on where \(x\) initially falls. Specifically, a point \(x + i/M\) falls into \([a, a+w_n]\) if \(x\) is in \(\{a - i/M, a + w_n - i/M\}\). Thus, the expected number of points falling into \(w_n\) from  the shifts \(\{x + i/M\}\) is $M \cdot w_n$. Since $w_n <1/M$ for all $n$, no participant is chosen twice and hence
\[
\E[g(n; \mathcal{M})] = M \cdot w_n \cdot \frac{1}{M} = w_n.
\]
\end{proof}
\begin{prop}\label{prop:decentralizationStitch}
Let the weight vector $\vec{w}=(w_1,\ldots,w_{N})$ satisfy $\max\limits_{n \in [N]} w_n < 1/M$. Then  the sortition algorithm \texttt{Stitch} (Alg.~\ref{algorithm:Stitch}) is $\lambda$-decentralized with $\lambda = M \min\limits_{i \in [N]} w_i$. 
\end{prop}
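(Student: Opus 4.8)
The plan is to exploit that \texttt{Stitch} assigns identical raw weights to all selected participants, so that after normalization every committee member holds exactly the same voting power $1/M$. The decentralization ratio then collapses to a single worst-case comparison against the smallest admissible weight, and the claimed value of $\lambda$ drops out immediately.

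First I would recall from the proof of Prop.~\ref{prop:fairnessStitch} that the hypothesis $\max_{n\in[N]} w_n < 1/M$ guarantees that the $M$ shift points $\{x+i/M\}$, $i\in[M]$, land in $M$ pairwise distinct weight intervals. Hence the committee $\MM$ always consists of exactly $M$ distinct members, and each of them receives raw weight $g_{m_i}=1$ in the weight-assignment step. Normalizing over the committee then gives, for every realization of $x$,
\[
g(i;\MM)=\frac{g_{m_i}}{\sum_{j=1}^{M} g_{m_j}}=\frac{1}{M},\qquad i\in\MM.
\]

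Next I would substitute this into the ratio appearing in Def.~\ref{def: decentralization}. For any committee member we have $g(i;\MM)/w_i = 1/(M\,w_i)$, and since $1/(M\,w_i)$ is decreasing in $w_i$, the maximum over the committee is attained at its lightest member. Bounding that weight below by the global minimum, $w_i \ge \min_{n\in[N]} w_n$ for every $i\in\MM$, yields
\[
\max_{i\in[M]}\frac{g(i;\MM)}{w_i}=\frac{1}{M\,\min_{i\in\MM} w_i}\le\frac{1}{M\,\min_{n\in[N]} w_n}.
\]
As this holds for every realization of the sampled point $x$, Def.~\ref{def: decentralization} is satisfied with $1/\lambda = 1/\bigl(M\,\min_{n\in[N]} w_n\bigr)$, i.e. $\lambda = M\,\min_{n\in[N]} w_n$.

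The argument is essentially immediate once equal voting power is established, so I do not anticipate a genuine obstacle; the only step needing care is the counting fact $\aps{\MM}=M$, which is inherited from the fairness analysis and is precisely what makes the normalized voting power exactly $1/M$ rather than merely at most $1/M$. If one additionally wants to argue that this $\lambda$ is the best possible (not just admissible), I would note that the participant attaining $\min_{n} w_n$ owns an interval of positive length, so with positive probability the stitch hits it; on such a realization the displayed bound is attained with equality, showing that $\lambda$ cannot be increased.
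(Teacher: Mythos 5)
Your proof is correct and takes essentially the same approach as the paper's: once elected, every committee member holds exactly $1/M$ of the voting power, so the worst-case ratio in Def.~\ref{def: decentralization} is $1/\bigl(M\min_{n\in[N]}w_n\bigr)$. You are in fact slightly more careful than the paper's two-line argument, which writes the minimum only over the realized committee; your step bounding $\min_{i\in\MM}w_{m_i}$ below by $\min_{n\in[N]}w_n$ for every realization (together with the tightness remark) is precisely what justifies the stated value of $\lambda$.
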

\begin{proof}
Once elected, a participant gets $1/M$ of the voting power. Therefore, we obtain
\begin{eqnarray*}
    \frac{1}{\lambda} = \max_{i\in[M]} \frac{g(m_i; \MM)}{w_{m_i}} = \frac{1/M}{\min\limits_{i\in [M]} w_{m_i}}
\end{eqnarray*}
or $\lambda = M \min\limits_{i\in [M]} w_{m_i}$.
\end{proof}

\subsection{Algorithm \texttt{Cumulative Rejection Sampling}}

The idea behind Alg. \ref{algorithm:CumulativeRejectionSampling}, called \texttt{Cumulative Rejection Sampling}, is to sample a random $M$-sized committee in a way such that we increase the likelihood of selecting committees with larger initial weights.
However, this modification introduces a bias which must be counteracted through a recalibration of the probabilities to be chosen in the committee or in the voting power $g_i$ to maintain the overall fairness of the algorithm. The Cumulative Rejection Sampling follows the first possibility. 

\begin{algorithm}[t]
\caption{\texttt{Cumulative Rejection Sampling}}
\begin{algorithmic}[1]
\label{algorithm:CumulativeRejectionSampling}
\REQUIRE $\vec{w}$, $M$
\ENSURE $\vec{g}$, $\mathcal{M}$
\STATE Based on $\vec{w}$ and $M$, compute $p_i$'s using \eqref{eq: pis}.
\STATE Generate a uniform random subset of $\MM = \{m_1, \dots, m_M\}$ participants.
\STATE Generate a uniform random variable $U \sim U[0, 1]$. If $U \geq \sum_{j=1}^M p_{m_j}$, reject the subset $\MM$ and return to Step 2.
\STATE Set $g_m=1$ for $m \in \MM$ and $g_m \gets 0$ for $m \not\in \MM$.
\RETURN $\vec{g}$, $\mathcal{M}$
\end{algorithmic}
\end{algorithm}

\begin{remark}
   One can replace $1$ in the uniform random variable $U[0, 1]$ in Step 2 with a constant defined as the maximum possible weight of an $M$-sized subset of $\vec{w}$.
\end{remark}

\begin{remark}
There are efficient algorithms with complexity $O(n)$ to sample from a uniform random subset, e.g., the Fisher-Yates shuffling~\cite{fisher-yates-wikipedia}.
\end{remark}


\begin{prop}\label{prop:CumulativeRejectionSampling}
Suppose $N>M>2$ and  each entry $w_i$ of the weight vector $\vec{w}$ satisfies 
\begin{equation}\label{eq:condition  wi}
    w_i \in  \left[\frac{M-1}{M(N-1)}, \frac{N-2M+M^2}{M^2(N-1)}\right]. 
\end{equation}
If one sets
\begin{equation}\label{eq: pis}
   p_i=\frac{M(N-1)}{N-M}w_i - \frac{M-1}{N-M},
\end{equation}
then the sortition algorithm \texttt{Cumulative Rejection Sampling} (Alg.~\ref{algorithm:CumulativeRejectionSampling}) is fair.
\end{prop}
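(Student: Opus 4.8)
The plan is to reduce fairness to a single statement about selection probabilities, and then compute those probabilities by a counting argument over $M$-subsets. Since \texttt{Cumulative Rejection Sampling} assigns $g_m=1$ to every committee member, the normalized voting power of a member of an $M$-sized committee is $g(n;\MM)=1/M$ when $n\in\MM$ and $0$ otherwise. Hence $\E[g(n;\MM)]=\tfrac{1}{M}\,\P[n\in\MM]$, and fairness $\E[g(n;\MM)]=w_n$ is equivalent to proving that each participant is selected with probability exactly $\P[n\in\MM]=M w_n$.

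First I would check that the $p_i$ define a legitimate rejection scheme. Substituting the two endpoints of the interval in~\eqref{eq:condition  wi} into~\eqref{eq: pis} shows that each $p_i$ ranges over $[0,1/M]$, so the acceptance threshold $\sum_{j\in S}p_j$ lies in $[0,1]$ for every $M$-subset $S$ and is therefore a valid probability. I would also record the identity $\sum_{i=1}^N p_i=1$, which follows from $\sum_i w_i=1$ by a one-line computation using~\eqref{eq: pis}; this is what will make the normalization collapse later.

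Next I would pin down the distribution over committees. In a single round a subset $S$ is drawn uniformly (probability $1/\binom{N}{M}$) and accepted with probability $\sum_{j\in S}p_j$; since the rounds are i.i.d., the finally accepted committee satisfies $\P[\MM=S]\propto \sum_{j\in S}p_j$, with normalizing constant $\sum_{S}\sum_{j\in S}p_j=\binom{N-1}{M-1}\sum_j p_j=\binom{N-1}{M-1}$. The marginal is then
\[
\P[n\in\MM]=\frac{1}{\binom{N-1}{M-1}}\sum_{S\ni n}\sum_{j\in S}p_j .
\]
Splitting the inner sum into the $j=n$ term (present in all $\binom{N-1}{M-1}$ subsets containing $n$) and the $j\neq n$ terms (each present in the $\binom{N-2}{M-2}$ subsets containing both $n$ and $j$), this reduces to $p_n+(1-p_n)\,\binom{N-2}{M-2}/\binom{N-1}{M-1}$.

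Finally I would simplify using $\binom{N-2}{M-2}/\binom{N-1}{M-1}=(M-1)/(N-1)$, obtaining $\P[n\in\MM]=p_n\tfrac{N-M}{N-1}+\tfrac{M-1}{N-1}$, and substitute~\eqref{eq: pis}; the constant terms $-\tfrac{M-1}{N-1}$ and $+\tfrac{M-1}{N-1}$ cancel, leaving exactly $Mw_n$, which yields fairness. The main obstacle is the combinatorial bookkeeping in the marginal computation — correctly pairing the double-counting of the off-diagonal $j\neq n$ terms with the binomial identity — rather than any delicate estimate. The explicit interval~\eqref{eq:condition  wi} plays only an auxiliary role: it guarantees $p_i\ge 0$ and $\sum_{j\in S}p_j\le 1$ so that the scheme is well defined, and does not otherwise enter the fairness identity itself.
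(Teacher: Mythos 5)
Your proposal is correct and follows essentially the same route as the paper's proof: both reduce fairness to the marginal selection probability under the distribution $\P[\MM=S]\propto\sum_{j\in S}p_j$, and both use the identical combinatorial expansion with $\binom{N-1}{M-1}$ subsets containing $n$ and $\binom{N-2}{M-2}$ subsets containing a given pair, together with $\sum_i p_i=1$. The only (minor, presentational) difference is direction: you verify that the stated $p_i$ yields $\P[n\in\MM]=Mw_n$, computing the normalizing constant $1/\binom{N-1}{M-1}$ directly from the sampling scheme, whereas the paper derives the formula for $p_i$ and the constant $D$ by imposing the fairness identity and summing it over $i$.
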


\begin{proof}
The algorithm uses a standard rejection sampling technique. We want to sample from a distribution on all subsets of size $M$ with a probability proportional to the sum of its weights $p_i$; specifically, a certain committee $\MM=\{m_1,\ldots,m_M\}$ is chosen with probability $D\sum_{j=1}^M p_{m_j}$ with $D$ being a global constant (independent of $\MM$). The weights $p_i$'s will be specified later to ensure the fairness of the procedure. 

As we set $g_i=1$ and $g(i;\MM)=\frac{1}{M}$ when $i\in \MM$, the expected voting power of the $i$th participant equals the probability of the $i$th participant being selected to the committee multiplied by $1/M$. To achieve fairness, the first equality in the following chain of equalities must hold:
\begin{eqnarray*}
    w_i&=&\E[g(i;\MM)]\\
    &=& \P(i \in \MM)\frac{1}{M} \\&=& \sum_{\substack{m_1, \dots, m_M \\ i \in \{m_1, \dots, m_M\}}} \frac{D}{M}\sum_{i=1}^M p_{m_i} \\ &=&  \frac{D}{M}\binom{N-1}{M-1} p_i + \frac{D}{M}\sum_{j \neq i} \binom{N-2}{M-2} p_j \\ &=& \frac{D}{M}\binom{N-1}{M-1} p_i +  \frac{D}{M}\binom{N-2}{M-2} (1 - p_i),
\end{eqnarray*}
where we used in the last equality that $\sum_j p_j=1$.
Recall that the sum of $w_i$'s is equal to one. Thus
$$
1=\sum_{i} w_i = \frac{D}{M}\binom{N-1}{M-1} \sum_i p_i +  \frac{D}{M}\binom{N-2}{M-2} \sum_i(1 - p_i)
$$
Again recalling that $\sum_i p_i =1$ we can identify $D$:
\begin{eqnarray*}
   1= \frac{D}{M} \binom{N-1}{M-1}+ \frac{D}{M} \binom{N-2}{M-2}(N-1) = \frac{D}{M}\binom{N-1}{M-1}+ \frac{D}{M} \binom{N-1}{M-1} (M-1)
\end{eqnarray*}
or
$$
D = \frac{1}{\binom{N-1}{M-1}}.
$$
Using the obtained $D$, we can get the formula for $p_i$ as a function of $w_i$ as follows:
$$
w_i= \frac{p_i}{M}+\frac{(1-p_i)(M-1)}{M(N-1)}
$$
or
$$
p_i = \frac{M(N-1)}{N-M}w_i - \frac{M-1}{N-M}.
$$
Finally, we obtain the restrictions for $w_i$. Since $p_i$ must be positive, it follows that 
$$
w_i \ge \frac{M-1}{M(N-1)}. 
$$
Similarly, one can ensure $\sum_{j=1}^{M}p_{m_j}\le 1$ by imposing $p_i\le \frac{1}{M}$ for all $i$, thus,
$$
w_i\le \frac{N-M}{M^2(N-1)} + \frac{M-1}{M(N-1)}= \frac{N-2M+M^2}{M^2(N-1)}.
$$
\end{proof}
\begin{rem}
Note that the previous proposition also covers the case with homogeneous weights $w_i=1/N.$ For this case, the result follows by a standard symmetry argument. However, our main result here is that we can extend the fairness to a slightly larger class of weights. 
\end{rem}

Similar to Prop.~\ref{prop:decentralizationStitch}, we obtain the following decentralization result.

\begin{prop}\label{prop:decentralizationCRS}
Let the weight vector $\vec{w}=(w_1,\ldots,w_{N})$ satisfy \eqref{eq:condition  wi}. Then  the  \texttt{Weighted Rejection Sampling} (Alg.~\ref{algorithm:WeightedRejectionSampling}) is $\lambda$-decentralized with $\lambda = M \min\limits_{i \in [N]} w_i$. 
\end{prop}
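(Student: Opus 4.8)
The plan is to reuse almost verbatim the argument of Prop.~\ref{prop:decentralizationStitch}, because decentralization in the sense of Def.~\ref{def: decentralization} is determined entirely by how voting power is assigned to elected members and not by the selection distribution, which is the only feature distinguishing this algorithm from \texttt{Stitch}. The structural fact I would exploit is that Alg.~\ref{algorithm:CumulativeRejectionSampling} sets $g_i=1$ for every $i\in\MM$, so after the normalization $g(i;\MM)=g_i/\sum_{j\in\MM}g_j$ each of the $M$ committee members carries exactly $g(i;\MM)=1/M$ of the voting power, regardless of its initial weight $w_i$.

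With this in hand, the first step is to insert the constant value $g(i;\MM)=1/M$ into the ratio appearing in Def.~\ref{def: decentralization}. For a fixed realization $\MM$, the quantity $g(i;\MM)/w_i=(1/M)/w_i$ is maximized by the committee member of least initial weight, so
\begin{eqnarray*}
\max_{i\in\MM}\frac{g(i;\MM)}{w_i}=\frac{1/M}{\min_{i\in\MM}w_i}.
\end{eqnarray*}
Since the bound in Def.~\ref{def: decentralization} must hold for \emph{every} realization, I would then take the worst case over all admissible committees; as $1/M$ does not depend on which committee is drawn, the worst case is the one containing the globally lightest participant, giving $1/\lambda=(1/M)/\min_{i\in[N]}w_i$ and hence $\lambda=M\min_{i\in[N]}w_i$.

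The only step requiring a little care — and the reason the minimum is taken over $[N]$ rather than over a single committee — is to confirm that the participant attaining $\min_{i\in[N]}w_i$ actually lies in some committee produced with positive probability, so that the supremum over realizations is genuinely attained at the global minimum. This is where the hypothesis \eqref{eq:condition  wi} enters: within that weight range the probabilities from \eqref{eq: pis} satisfy $p_i>0$ and $\sum_{j\in\MM}p_{m_j}\le 1$ for every $M$-subset, so the rejection step accepts each $M$-subset with positive probability, including one that contains the lightest participant. This reachability check is the whole of the nontrivial content; the remaining arithmetic is immediate.
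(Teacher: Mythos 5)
Your proof is correct and is essentially the paper's own argument: the paper states this proposition without a separate proof (it concerns \texttt{Cumulative Rejection Sampling} despite the ``Weighted Rejection Sampling'' wording, as you correctly inferred from its label and the hypothesis \eqref{eq:condition  wi}), deferring instead to Prop.~\ref{prop:decentralizationStitch}, whose proof is exactly your computation that every elected member carries voting power $1/M$, so that $\max_i g(m_i;\MM)/w_{m_i} \le (1/M)/\min_{i\in[N]}w_i$ over all realizations. Your extra reachability check is needed only to show the constant is tight rather than for the stated bound itself (a smaller $\lambda$ is a weaker claim, so unreachable committees cost nothing), and there it needs one small correction: at the left endpoint of \eqref{eq:condition  wi} one has $p_i=0$, so not every $M$-subset is accepted with positive probability; rather, since $\sum_j p_j=1$ forces some $p_j>0$, there exists an accepted committee containing the globally lightest participant, which suffices.
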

\begin{remark}
To have good decentralization properties for both   \texttt{Stitch} and \texttt{Weighted Rejection Sampling}, one needs to control the eligibility of participants allowed to be included in a committee; specifically, their minimum weight has to be bounded below by a certain constant.
\end{remark}
\subsection{Algorithm \texttt{Weighted Rejection Sampling}}

The idea of Alg. \ref{algorithm:WeightedRejectionSampling}, called \texttt{Weighted Rejection Sampling}, is to sample $M$-sized committees and provide decentralization by rejecting committees with extremely low total initial weights.

\begin{algorithm}[t]
\caption{\texttt{Weighted Rejection Sampling}}
\begin{algorithmic}[1]
\label{algorithm:WeightedRejectionSampling}
\REQUIRE $\vec{w}$, $M$, $\alpha$
\ENSURE $\vec{g}$, $\mathcal{M}$
\STATE Based on $\vec{w}$, $M$ and $\alpha$, compute $p_i$'s using~\eqref{eqn:g-WRS}.
\STATE Generate a uniform random subset of $M$ participants $\committee = \{m_1, \dots, m_M\}$.
\STATE Generate a uniform random variable $U \sim U[0, 1]$. If $U \geq \sum_{j=1}^M p_{m_j} \cdot \mathbbm{1}\left(\sum_{j=1}^M w_{m_j} \geq \alpha\right)$, reject the subset $\committee$ and return to Step 2.
\STATE Set $g_{m}=p_{m}$ for $m \in \MM$ and $g_m = 0$ for $m \not\in \MM$.
\RETURN $\mathcal{M}$, $\vec{g}$
\end{algorithmic}
\end{algorithm}


Given a real parameter $\alpha$, for each participant $n\in[N]$, define the number of $M$-sized committees with member $n$ with  total initial weight at least $\alpha$ as follows
\begin{eqnarray}
\label{eqn:Cn-WRS}
    C_n(\alpha) = \left|\left\{\committee:\quad n \in \committee, \,\sum_{m \in \committee} w_m \geq \alpha \right\}\right|.
\end{eqnarray} 

\begin{prop}\label{prop:WeightedRejectionSampling}
Suppose $N>M>2$, $0<\alpha<1$ and for every $i \in [N]$, it holds that $C_i(\alpha)>0$. If one sets 
\begin{eqnarray}
\label{eqn:g-WRS}
    p_i = \frac{w_i/C_i(\alpha)}{\sum_{j=1}^N w_j/C_j(\alpha)},
\end{eqnarray}
then the sortition \texttt{Weighted Rejection Sampling} (Alg. \ref{algorithm:WeightedRejectionSampling}) is fair. 
\end{prop}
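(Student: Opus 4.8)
The plan is to reuse the rejection-sampling bookkeeping from Prop.~\ref{prop:CumulativeRejectionSampling}, but with careful attention to a new twist: here the voting power inside a committee is \emph{non-uniform}, $g_m=p_m$ rather than $1/M$, and the key to the whole argument will be how this non-uniformity interacts with the acceptance probability. First I would identify the law of the committee that survives the loop. The proposal in Step~2 is uniform over all $\binom{N}{M}$ subsets, and a proposed $\committee$ is accepted with probability $a(\committee) := \left(\sum_{j \in \committee} p_j\right)\mathbbm{1}\!\left(\sum_{j\in\committee} w_j \ge \alpha\right)$, so standard rejection sampling gives that the returned committee satisfies $\P(\committee)=a(\committee)/Z$ with $Z=\sum_{\committee'} a(\committee')$. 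I would note in passing that this requires $a(\committee)\le 1$ for every admissible $\committee$; as in the remark after Alg.~\ref{algorithm:CumulativeRejectionSampling}, one may rescale the acceptance threshold without changing the resulting law, so I treat the $p_i$ as normalized for acceptance.

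Next I would expand the expected normalized voting power. By the normalization convention, $g(n;\committee)=p_n/\sum_{m\in\committee}p_m$ when $n\in\committee$ and $0$ otherwise, hence
\[
\E[g(n;\committee)] = \sum_{\committee \ni n} \frac{a(\committee)}{Z}\cdot\frac{p_n}{\sum_{m\in\committee} p_m}.
\]
The crucial observation --- the step I expect to carry the whole proof --- is the cancellation $a(\committee)/\sum_{m\in\committee} p_m=\mathbbm{1}(\sum_{m\in\committee} w_m\ge\alpha)$, valid whenever the indicator equals $1$ (and making the summand vanish otherwise, since every $p_m>0$). Thus the weight-proportional acceptance exactly neutralizes the voting-power normalization, leaving
\[
\E[g(n;\committee)] = \frac{p_n}{Z}\sum_{\committee\ni n}\mathbbm{1}\!\left(\sum_{m\in\committee} w_m\ge\alpha\right) = \frac{p_n\, C_n(\alpha)}{Z},
\]
by the definition~\eqref{eqn:Cn-WRS} of $C_n(\alpha)$.

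It then remains to evaluate $Z$ and substitute the prescribed $p_i$. I would compute $Z$ by swapping the order of summation, $Z=\sum_{\committee} a(\committee)=\sum_{m} p_m\, C_m(\alpha)$, and then insert $p_i=(w_i/C_i(\alpha))/\sum_j (w_j/C_j(\alpha))$. The factor $C_m(\alpha)$ cancels the $1/C_m(\alpha)$ inside $p_m$, and $\sum_m w_m=1$ yields $Z=1/\sum_j (w_j/C_j(\alpha))$. The identical cancellation in the numerator gives $p_n C_n(\alpha)=w_n/\sum_j(w_j/C_j(\alpha))$, so dividing the two expressions produces exactly $\E[g(n;\committee)]=w_n$, which is fairness. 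The hypothesis $C_i(\alpha)>0$ for all $i$ is precisely what makes the $p_i$ and the denominator $\sum_j (w_j/C_j(\alpha))$ well defined and strictly positive, so no edge case is left unhandled.
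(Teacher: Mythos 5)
Your proposal is correct and follows essentially the same route as the paper's proof: the same rejection-sampling law for the accepted committee, the same key cancellation of $\sum_{m\in\committee}p_m$ between the acceptance probability and the voting-power normalization, and the same identification of the remaining indicator sum with $C_n(\alpha)$. The only difference is direction — the paper derives $p_i\propto w_i/C_i(\alpha)$ as a necessary condition and then normalizes, whereas you substitute the prescribed $p_i$ and verify fairness directly by computing $Z=\sum_m p_m C_m(\alpha)$ explicitly, which makes the sufficiency step slightly more explicit than in the paper.
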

\begin{proof}
In this algorithm, we sample from a distribution on all subsets of size $M$ with total weight more than $\alpha$ such that a probability distribution is proportional to the sum of its $p_i$'s; specifically, a certain committee $\MM=\{m_1,\ldots,m_M\}$ is chosen with probability $C\cdot\mathbbm{1}\left(\sum_{j=1}^m w_{m_j}\ge \alpha\right)\sum_{j=1}^M p_{m_j}$ with $C$ being a global constant (independent of $\MM$).  Additionally, we can assume that $\sum_i p_i = 1$. To make the sortition algorithm fair, one needs to achieve the first equality in the following chain of equalities for all $i \in [N]$:
\begin{eqnarray*}
    w_i&=& \mathbb{E}\left[ g(i; \MM)\right] \\
    &=& \sum_{\committee: \; i \in \committee} g(i; \MM) \cdot P(\committee) \\
    &=& \sum_{\committee: \; i \in \committee} \frac{p_i}{\sum_{j=1}^M p_{m_j}} \cdot \sum_{j=1}^M p_{m_j} \cdot \mathbbm{1}\left(\sum_{j=1}^{M} w_{m_j} \geq \alpha\right) \cdot C \\
    &=& \sum_{\committee: \; i \in \committee} p_i \cdot \mathbbm{1}\left(\sum_{j=1}^M w_{m_j} \geq \alpha\right) \cdot C \\
    &=& C \cdot p_i \sum_{\committee: \; i \in \committee} \mathbbm{1}\left(\sum_{j=1} w_{m_j} \geq \alpha\right).
\end{eqnarray*}
 As the sum of indicators in the right-hand side of the above expression is equal to $C_i(\alpha)$, $p_i$'s are proportional to $w_i/C_i(\alpha)$. To have the property $\sum_i p_i = 1$, we set $p_i$'s by~\eqref{eqn:g-WRS}. 
\end{proof}

While the computation of $C_n(\alpha)$ is impractical in a general case, we show that it is feasible in some special cases.

\begin{prop}\label{prop:decentralizationWeightedRejectionSampling}
Suppose that before normalization, the entries of the weight vector $\vec{w}$ are positive integers. Then it is possible to compute all $p_i$'s using~\eqref{eqn:g-WRS} in $O(NMV)$ time, where $V = \alpha \sum_i w_i$.
\end{prop}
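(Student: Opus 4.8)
The plan is to reduce the computation of the $p_i$'s to counting, for each participant $i$, the quantity $C_i(\alpha)$ defined in \eqref{eqn:Cn-WRS}: once all $C_i(\alpha)$ are known, evaluating \eqref{eqn:g-WRS} costs only $O(N)$ additional arithmetic operations. Write the pre-normalization integer weights as $w_1,\dots,w_N$ and set $S=\sum_i w_i$, so the normalized weights are $w_i/S$ and the admissibility condition $\sum_{m\in\MM} w_m/S \ge \alpha$ becomes the integer condition $\sum_{m\in\MM} w_m \ge V$ with $V=\alpha\sum_i w_i$. Since all subset sums are integers and we only need to decide on which side of the threshold $V$ a committee lies, I would count by the complement: $C_i(\alpha)=\binom{N-1}{M-1}-D_i$, where $D_i$ is the number of $M$-sized committees containing $i$ whose total weight is strictly less than $V$. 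The crucial observation is that only sums in $\{0,1,\dots,V-1\}$ are relevant to $D_i$, so the sum coordinate of any dynamic-programming table can be capped at $O(V)$ values; a partial sum that already reaches $V$ can never return below the threshold and is simply discarded.

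Next I would build a single global subset-sum (knapsack) table. Let $T[k][s]$ be the number of subsets of $[N]$ of size $k$ and total weight exactly $s$, for $0\le k\le M$ and $0\le s< V$. Initialising $T[0][0]=1$ and incorporating participants one at a time via the update $T[k][s]\gets T[k][s]+T[k-1][s-w_j]$, processed in decreasing order of $k$ and $s$ so that each element is used at most once, fills the whole table in $O(NMV)$ time, since each of the $N$ elements touches $O(MV)$ entries.

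The main obstacle is that $C_i(\alpha)$ concerns committees that contain the specific element $i$, and running the knapsack afresh for each $i$ would cost $O(N^2MV)$. To avoid the extra factor of $N$, I would exploit that the knapsack update is invertible: from the full table $T$ one can \emph{remove} a single element $i$ in $O(MV)$ time by the ``undo'' recursion $T^{(i)}[k][s]=T[k][s]-T^{(i)}[k-1][s-w_i]$, evaluated in increasing order of $k$, which produces the table $T^{(i)}$ counting subsets of $[N]\setminus\{i\}$. A size-$M$ committee containing $i$ of weight $<V$ is exactly $\{i\}$ together with a size-$(M-1)$ subset of $[N]\setminus\{i\}$ of weight $<V-w_i$, so
\[
D_i=\sum_{s=0}^{V-w_i-1} T^{(i)}[M-1][s],
\]
and $C_i(\alpha)$ then follows in a further $O(V)$ time. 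Repeating the removal and the summation for every $i$ costs $O(NMV)$ in total, matching the claimed bound, with $\binom{N-1}{M-1}$ computed once. The only delicate points to check are the correct processing orders in the forward and undo recursions, guaranteeing each element is counted with multiplicity one, and that capping sums at $V$ discards no committee relevant to $D_i$; both are routine.
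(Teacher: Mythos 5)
Your proposal is correct and follows essentially the same route as the paper: both build one global knapsack table counting $k$-sized subsets by exact (capped) sum in $O(NMV)$ time, then remove each element $i$ in $O(MV)$ time by exploiting the invertibility of its inclusion, and finish by complement counting against $\binom{N-1}{M-1}$. Your ``undo'' recursion $T^{(i)}[k][s]=T[k][s]-T^{(i)}[k-1][s-w_i]$ is exactly the paper's polynomial identity $q_m(x)=d_m(x)-x^{w_i}q_{m-1}(x) \bmod x^{V}$ read coefficient-wise, which the paper merely unrolls into an alternating sum evaluated at $x=1$.
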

\begin{proof}




We shall prove that there is a dynamic programming algorithm for this problem that runs in $O(MVN)$ time. For $i\ge k\ge 0$, let $\dynamicProgramming[v, k, i]$ denote the number of $k$-sized subsets of coordinates in $\vec{w}$ that sum up to $v$ such that these $k$ elements are chosen among the first $i$ elements of vector $\vec{w}$. We note that the recursive relation holds
$$
\dynamicProgramming[v, k, i]=
\begin{cases}
    \text { 1 } & \text { if } v=k=0; \\ 
    \text{ 0 } & \text { if } v\neq 0,  k=0;\\
    \mathbbm{1}(\sum_{j=1}^kw_j =v )  & \text { if }  k=i\neq 0; \\
    \dynamicProgramming[v, k, i-1] 
    + 
    \dynamicProgramming[v-w_i, k-1, i-1] & \text { otherwise.}
\end{cases}
$$
We observe that the coefficient at $x^v y^m$, $v < V, m \le M$ in the polynomial
$$
d(x,y) 
= \prod_{i=1}^{N} (1 + x^{w_i} y) \mod y^{M + 1} \mod x^{V}
= \sum_{m=0}^{M} d_m(x)y^m, \;
\mathrm{deg}(d_{m}) < V
$$
equals the number of $m$-sized subsets of coordinates in $\vec{w}$ that sum up to $v$. This implies that evaluating $d_m(1)$ corresponds to the number of $m$-sized subsets whose sum is less than $V$.

To calculate the number of $m$-sized subsets with a sum less than $V$ that exclude $i$th element, we need to evaluate $q_m(1)$, where polynomial $q_m(x)$ is defined in the right-hand side below:
$$
q(x,y) = \prod_{j\in[N]\setminus\{i\}}(1 + x^{w_j} y) \mod y^{M + 1} \mod x^{V}
= \sum_{m=0}^{M} q_m(x)y^m, \;
\mathrm{deg}(q_{m}) < V
$$
By simple algebraic manipulation, we derive the following identifies
$$
(1 + x^{w_i}y) \sum_{m=0}^{M} q_m(x)y^m \mod y^{M+1} \mod x^V = \sum_{l=0}^{M} d_l(x)y^l \\
$$
$$
q_m(x) = d_m(x) - x^{w_i}q_{m-1}(x) \mod x^{V}
$$
For $m=M$, we obtain
$$
q_M(x) = \sum_{m} (-1)^m x^{m w_i} d_{M - m}(x) \mod x^{V},
$$
thus,
$$
q_M(1) = \sum_{m=0}^{M} (-1)^m \sum_{v=0}^{V - m w_i - 1} \dynamicProgramming[v, M - m, N].
$$

Define $\hat{C}(\alpha)$ as the number of $M$-sized sets of coordinates in $\vec{w}$ whose sum is less than $V$ and $\hat{C}_i(\alpha)$ as the number of such sets that include the $i$th element of $\vec{w}$. It holds that
$$
\hat{C}(\alpha) = \sum_{v=0}^{V-1} \dynamicProgramming[v, M, N]
$$
and
$$
\hat{C}_i(\alpha) = \hat{C}(\alpha) - q_M(1).
$$

Finally, we need to calculate $C_i(\alpha)$ for every $i \in [N]$ in order to compute $p_i$ as defined by~(\ref{eqn:g-WRS})
$$C_i(\alpha) = \binom{N - 1}{M - 1} - \hat{C}_i(\alpha).$$
\end{proof}

\begin{prop}\label{prop:decentralizationWRS}
Let $p_i$ be defined by (\ref{eqn:g-WRS}). Then the algorithm \texttt{Weighted Rejection Sampling}  (Alg.~\ref{algorithm:WeightedRejectionSampling}) is $\lambda$-decentralized with $\lambda \ge \min\limits_{i \in [N]} w_i(1 + \frac{F_i}{p_i})$, where $F_i$ is the sum of $M - 1$ smallest $p_j$'s, excluding $p_i$.
\end{prop}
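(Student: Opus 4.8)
The plan is to translate the decentralization quantity directly into an optimization over the $p_i$'s and then bound the relevant minimum. By Definition~\ref{def: decentralization}, the decentralization $\lambda$ is the reciprocal of the worst-case ratio $g(i;\MM)/w_i$ taken over all committees $\MM$ the algorithm can realize and all members $i\in\MM$. Since \texttt{Weighted Rejection Sampling} sets $g_m=p_m$ for $m\in\MM$ and the voting power is normalized, I would first record that for any realized committee $\MM$ and any $i\in\MM$,
$$g(i;\MM)=\frac{p_i}{\sum_{m\in\MM}p_m},$$
so that
$$\frac{1}{\lambda}=\max_{\MM}\max_{i\in\MM}\frac{p_i}{w_i\sum_{m\in\MM}p_m},\qquad \lambda=\min_{\MM}\min_{i\in\MM}\frac{w_i\sum_{m\in\MM}p_m}{p_i},$$
where both extrema range only over $M$-sized committees that the algorithm can actually output, i.e. those $\MM$ with $\sum_{m\in\MM}w_m\ge\alpha$.

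Next, I would fix an index $i$ and rewrite the inner expression by splitting off the $i$th term:
$$\frac{w_i\sum_{m\in\MM}p_m}{p_i}=w_i\left(1+\frac{\sum_{m\in\MM\setminus\{i\}}p_m}{p_i}\right).$$
Minimizing this over feasible committees containing $i$ amounts to minimizing $\sum_{m\in\MM\setminus\{i\}}p_m$, i.e. choosing the remaining $M-1$ members with $p$-values as small as possible. The key observation is that if one drops the eligibility constraint $\sum_{m\in\MM}w_m\ge\alpha$, the unconstrained minimum of $\sum_{m\in\MM\setminus\{i\}}p_m$ is exactly $F_i$, the sum of the $M-1$ smallest $p_j$ with $j\neq i$. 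Since relaxing a constraint can only lower a minimum, the constrained minimum is at least $F_i$, giving
$$\min_{\MM\ni i,\ \text{feasible}}\frac{w_i\sum_{m\in\MM}p_m}{p_i}\ge w_i\left(1+\frac{F_i}{p_i}\right).$$
Taking the minimum over $i\in[N]$ then yields $\lambda\ge\min_{i\in[N]}w_i(1+F_i/p_i)$, as claimed.

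The only genuine obstacle is the weight-eligibility constraint in Step~3 of Alg.~\ref{algorithm:WeightedRejectionSampling}, and it is precisely this constraint that forces the statement to be an inequality rather than an equality. The committee that attains the sum of the $M-1$ smallest $p_j$'s need not have total weight at least $\alpha$, so the true worst-case feasible committee may have a strictly larger $\sum_{m\in\MM\setminus\{i\}}p_m$ and hence produce a larger $\lambda$. I would therefore phrase every minimization over \emph{feasible} committees and invoke the constraint-relaxation step only to pass to the lower bound $F_i$, never to claim equality. A minor point worth checking is that all $p_i>0$ under the hypotheses, so that every feasible committee is reachable with positive acceptance probability; this follows from $C_i(\alpha)>0$ and $w_i>0$ as in Prop.~\ref{prop:WeightedRejectionSampling}, and it guarantees that the ratios above are well defined.
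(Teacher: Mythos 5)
Your proof is correct and takes essentially the same route as the paper's: the paper's (much terser) argument likewise bounds the voting power of participant $i$ by $p_i$ divided by the smallest possible sum $p_i+F_i$ of an $M$-sized committee containing $i$, which yields $\lambda\ge\min_i w_i(1+F_i/p_i)$. Your added care in treating the eligibility constraint $\sum_{m\in\MM}w_m\ge\alpha$ via constraint relaxation makes explicit exactly why the statement is an inequality, a point the paper's one-line proof leaves implicit.
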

\begin{proof}
    The proof follows directly from the definition of decentralization. The voting power of participant $i$ could become at most $p_i$ divided by the smallest sum of an $M$-sized committee that includes $p_i$.
\end{proof}

\subsection{Algorithm \texttt{Representative Electoral College}}

The idea of Alg.~\ref{algorithm:RepresentativeElectoralCollege}, called \texttt{Representative Electoral College}, is to partition participants into $M$ groups based on their initial weights, determine voting power for each group, and select one participant from each group. In order to ensure fairness, it is important to control the likelihood of a participant being chosen to the committee.

\begin{algorithm}[t]
\caption{\texttt{Representative Electoral College}}
\begin{algorithmic}[1]
\label{algorithm:RepresentativeElectoralCollege}
\REQUIRE $\vec{w}$, $M$
\ENSURE $\vec{g}$, $\mathcal{M}$
\STATE Sort participants in non-decreasing order of their initial weights, i.e., $w_1 \leq \dots \leq w_N$.
\STATE Split $N$ participants into $M$ almost equal groups: $G_1$ consists of first $N_1 = \lfloor N/M\rfloor + \mathbbm{1}(N \mod M \geq 1)$ participants, \dots, $G_i$ consists of next $N_i = \lfloor N/M\rfloor  + \mathbbm{1}(N\mod M \geq i)$ participants, \dots, $G_M$ consists of the last $N_M = \lfloor N/M\rfloor$ participants.
\STATE Based on $\vec{w}$ and the partition $G_i$'s compute $p_i$'s using~\eqref{eqn:g-REC}.
\STATE Select $\MM=\{m_1,\ldots,m_M\}$ randomly such that $m_i$ is drawn from $G_i$ and for each $j\in G_i$, $\P(m_i=j)=\frac{w_j}{\sum_{n \in G_i} w_n}$.
\STATE Assign $g_{m_i}=p_i$ for all $i\in [M]$ and $g_m=0$ for $m\not\in\MM$.
\RETURN $\vec{g}$, $\mathcal{M}$
\end{algorithmic}
\end{algorithm}


\begin{prop}\label{prop:RepresentativeElectoralCollege}
Suppose that the entries of $\vec{w}$ are sorted in non-decreasing order and the groups $G_i$'s are defined as in Alg.~\ref{algorithm:RepresentativeElectoralCollege}. If one sets
\begin{eqnarray}
\label{eqn:g-REC}
    p_m = \sum_{i \in G_m} w_i,
\end{eqnarray}
then the sortition \texttt{Representative Electoral College} (Alg.~\ref{algorithm:RepresentativeElectoralCollege}) is fair. 
\end{prop}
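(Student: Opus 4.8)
The plan is to verify the defining fairness equality $\E[g(n;\MM)] = w_n$ directly for each participant $n$, exploiting the fact that the normalized voting power of a selected member turns out to equal its group weight $p_i$ exactly.

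First I would observe that the groups $G_1,\dots,G_M$ form a partition of $[N]$, so that $\sum_{i=1}^M p_i = \sum_{i=1}^M \sum_{j\in G_i} w_j = \sum_{j=1}^N w_j = 1$ by the normalization $\sum_n w_n = 1$. Because the committee $\MM$ always contains exactly one representative $m_i$ from each group $G_i$, the unnormalized voting powers of its members are precisely $p_1,\dots,p_M$, whose total is $1$. Hence the normalization appearing in the definition of $g(\cdot\,;\MM)$ is trivial: the member selected from $G_i$ receives normalized voting power
\begin{equation*}
g(m_i;\MM) = \frac{p_i}{\sum_{j=1}^M p_j} = p_i .
\end{equation*}

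Next I would fix an arbitrary participant $n$ and let $G_i$ be the unique group containing it. Since $\MM$ contains exactly one element of $G_i$, the event $n\in\MM$ coincides with $\{m_i = n\}$, whose probability is $\P(m_i=n) = w_n / \sum_{k\in G_i} w_k = w_n/p_i$ by Step~4 of Alg.~\ref{algorithm:RepresentativeElectoralCollege} together with \eqref{eqn:g-REC}. Conditioned on this event, $n$ obtains voting power $p_i$, and otherwise it obtains $0$. Combining the two cases,
\begin{equation*}
\E[g(n;\MM)] = \frac{w_n}{p_i}\cdot p_i + \Bigl(1 - \frac{w_n}{p_i}\Bigr)\cdot 0 = w_n,
\end{equation*}
which is exactly the fairness condition, and it holds for every $n\in[N]$.

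The argument is short, and the only step requiring genuine care is the normalization observation: one must confirm that the selected voting powers always sum to $1$, so that the normalized power equals $p_i$ exactly and the $p_i$ factors cancel against $\P(m_i=n)$. I would emphasize that neither the independence of the draws across different groups nor the non-decreasing sorting in Step~1 plays any role in fairness — both become relevant only for the decentralization analysis — so I anticipate no substantive obstacle to establishing this proposition.
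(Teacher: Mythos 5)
Your proof is correct and follows essentially the same route as the paper's: both compute the expected voting power as the product of the election probability $w_n/p_i$ (from Step~4 of the algorithm) and the voting power $p_i$ received upon election, which cancel to give $w_n$. The only difference is that you make explicit the normalization observation $\sum_{j=1}^M p_j = 1$ (so that $g(m_i;\MM)=p_i$ exactly), a point the paper leaves implicit in the fairness proof and only states in its subsequent decentralization proposition.
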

\begin{proof}
Consider participant $j$ from 
group $G_m$. Then its expected voting power can be computed as follows:
\begin{eqnarray*}
    \E[ g(j;\MM)] &=& p_m \cdot \P(j \in \MM) \\
    &=&
    \sum_{n \in G_m} w_n \cdot \frac{w_n}{\sum_{n \in G_m} w_n} \\ 
    &=& w_j.
\end{eqnarray*}
\end{proof}

\begin{prop}\label{prop:RepresentativeElectoralCollege}
The sortition \texttt{Representative Electoral College} (Alg.~\ref{algorithm:RepresentativeElectoralCollege}) is $\lambda$-decentralized with $\lambda = \min\limits_{m \in [M]}\min\limits_{i \in G_m} \frac{ w_i}{p_m}$, where $p_m$ is defined at~\eqref{eqn:g-REC}.
\end{prop}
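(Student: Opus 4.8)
The plan is to apply Definition~\ref{def: decentralization} directly, in the same spirit as the proofs of Prop.~\ref{prop:decentralizationStitch} and Prop.~\ref{prop:decentralizationCRS}. The crucial first step is to observe that the assigned weights $p_m$ already form a probability vector, so that the normalization built into $g(\cdot;\MM)$ is trivial. Indeed, since the groups $G_1,\ldots,G_M$ partition $[N]$ and the weights are normalized,
\[
\sum_{m=1}^M p_m = \sum_{m=1}^M \sum_{i \in G_m} w_i = \sum_{i=1}^N w_i = 1 .
\]
Consequently, for any committee realization $\MM=\{m_1,\ldots,m_M\}$ the denominator in $g(m_i;\MM)=g_{m_i}/\sum_{j} g_{m_j}$ equals $\sum_{m=1}^M p_m = 1$, and therefore the member drawn from $G_m$ receives voting power exactly $p_m$, with no rescaling.

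Next I would bound the per-member ratio. Fix a realization and consider the member selected from group $G_m$; it is some index $j\in G_m$, and by the previous step its ratio of voting power to initial weight is $g(j;\MM)/w_j = p_m/w_j$. Because the selection rule in line 4 of Alg.~\ref{algorithm:RepresentativeElectoralCollege} assigns positive probability to every $j\in G_m$, the worst case over all realizations is attained by choosing, within each group, the participant of smallest weight. Hence
\[
\max_{\MM}\; \max_{i \in [M]} \frac{g(m_i;\MM)}{w_{m_i}}
= \max_{m \in [M]} \max_{j \in G_m} \frac{p_m}{w_j}
= \max_{m \in [M]} \frac{p_m}{\min_{j \in G_m} w_j}.
\]

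Finally, this maximum is by definition equal to $1/\lambda$ with $\lambda = \min_{m \in [M]}\min_{j \in G_m} w_j/p_m$, which is precisely the claimed value; this establishes $\lambda$-decentralization. I do not expect any substantial obstacle: the only point that genuinely requires care is the verification that the $p_m$'s need no renormalization (the identity $\sum_m p_m = 1$), after which the statement follows immediately from the worst-case reading of Def.~\ref{def: decentralization}, exactly as in the analogous decentralization results for \texttt{Stitch} and \texttt{Weighted Rejection Sampling}.
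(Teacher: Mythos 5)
Your proof is correct and follows essentially the same route as the paper's: the paper likewise notes that a member elected from $G_m$ receives voting power $\frac{p_m}{\sum_i p_i}=p_m$ (implicitly using $\sum_m p_m = 1$, which you verify explicitly) and then reads off $\lambda$ from the worst-case ratio $p_m/w_j$ over groups and members. Your version simply spells out the normalization identity and the worst-case attainability that the paper leaves implicit.
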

\begin{proof}
Once elected to the committee, participant $j$ from group $G_m$ gets voting power $\frac{p_m}{\sum_i p_i}=p_m$. This implies the statement.
\end{proof}

\begin{figure}[h]
    \centering
    \includegraphics[width=0.75\linewidth]{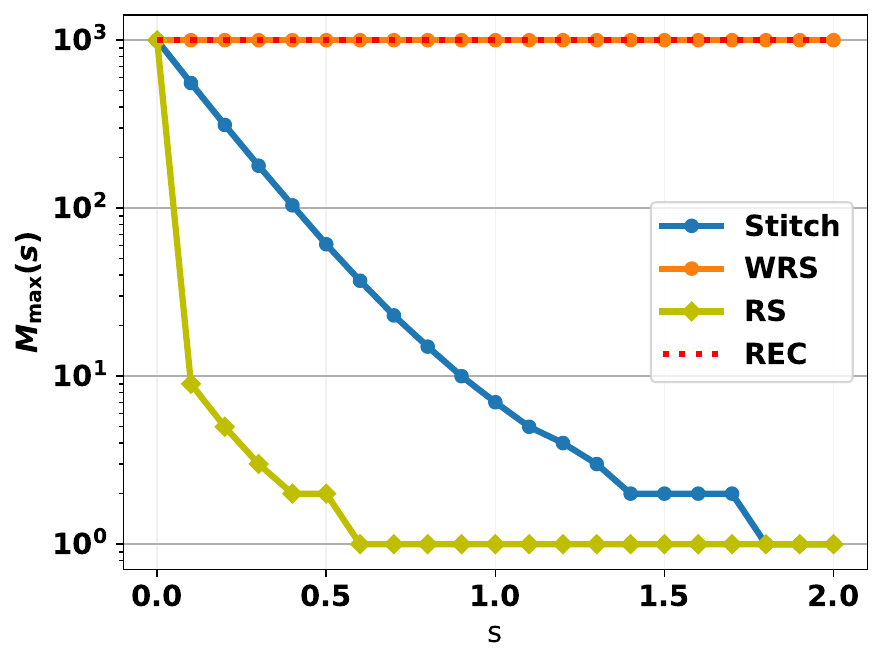}
    \caption{Maximum allowed $M$ as a function of $s$}
    \label{fig:max_M}
\end{figure}

\begin{figure}[h]
    \centering
    \includegraphics[width=0.75\linewidth]{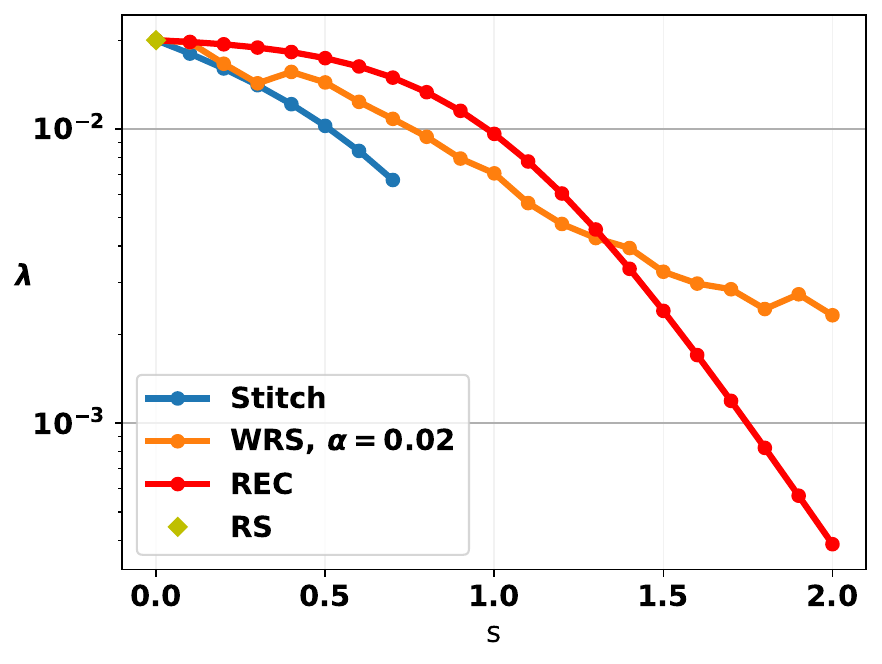}
    \caption{Decentralization as a function of $s$}
    \label{fig:M20}
\end{figure}

\begin{figure}[h]
    \centering
    \includegraphics[width=0.75\linewidth]{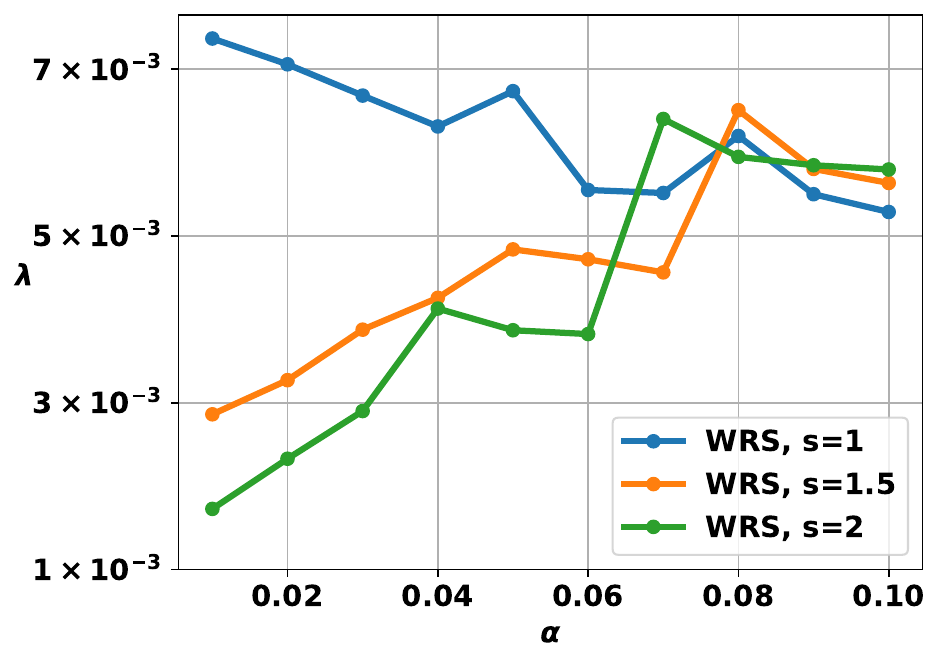}
    \caption{Decentralization as a function of $\alpha$}
    \label{fig:WRS-alpha}
\end{figure}

\section{Numerical Experiments}\label{sec: experiments}

In Section \ref{sec: proposed algorithms}, we have introduced four algorithms, all designed to achieve fairness while differing in their levels of decentralization. To compare these algorithms, we conducted numerical experiments. The experiments are carried out under the following settings. We have a total of $N=1000$ participants. The size $M$ of a committee $\MM$ varies, with a particular focus on $M=20$. This choice of $N$ and $M$ aligns with typical cases in existing blockchains, where $N$ typically ranges from hundreds to millions, and $M$ varies from tens to hundreds~\cite{DashCoreGroup2018,Buterin2020,Ovezik2022}.

We adopt Zipf's law \cite{Powers1998} to assign weights to participants; specifically, the non-normalized weight of participant $i$ is
$w_i = 1/i^s$,  where  $s$ is taken from the range $\{0, 0.1, \dots, 1.9, 2\}$. \texttt{Weighted Rejection Sampling} requires positive integer initial weights. In this case, we assign the non-normalized weight $w_i =  \lfloor N^s / i^s \rceil$. For $s=0$, all participants have equal weights, while larger values of $s$ introduce more centralization into the system.

\texttt{Stitch} and \texttt{Cumulative Rejection Sampling} (\texttt{CRS}) algorithms impose a constraint on $M$ (Propositions \ref{prop:fairnessStitch} and \ref{prop:CumulativeRejectionSampling})), establishing an upper limit for $M$ based on the initial weights. In contrast, \texttt{Weighted Rejection Sampling} (\texttt{WRS}) and \texttt{Representative Electoral College} (\texttt{REC}) algorithms do not restrict the possible values of $M$. Initially, we calculated the maximum permissible $M$ as a function of $s$, which is denoted as $M_{\text{max}}(s)$ in~Figure \ref{fig:max_M}. For \texttt{Stitch}, $M \geq 20$ is allowed until $s=0.7$, while for \texttt{CRS}, $M \geq 20$ is only valid at $s=0$, and from $s=0.6$ onwards, the only feasible choice for $M$ is $1$.

The decentralization $\lambda$ as a function of $s$ is given in Figure \ref{fig:M20}. All data points correspond to $N=1000$ and $M=20$. \texttt{Stitch} lacks data points for larger $M$ values due to restrictions. \texttt{CRS} features a single data point. \texttt{Stitch} demonstrates reasonable decentralization with a straightforward structure. \texttt{REC} yields better outcomes for small $s$, while \texttt{WRS} excels for larger~$s$.

\texttt{WRS} has an extra parameter $\alpha$ with Figure \ref{fig:WRS-alpha} displaying decentralization $\alpha$ as a function of $\alpha$. By fine-tuning $\alpha$, enhanced decentralization can be achieved.


Among the algorithms, \texttt{WRS} stands out as computationally intensive due to its pseudo-polynomial time complexity $O(MNV)$ and dependency on the $\alpha$-fraction of initial weights sum, $V=\alpha \sum_{i \in [N]} w_i$. According to \cite{kusmierz2022centralized}, the most relevant Zipf's law coefficients for tokens and coins are typically within the range of 0.5 to 1.5.  Our experiments showed that even on a laptop system equipped with 16GB RAM and a 3.4 GHz ARM CPU, a non-optimized, single-core implementation of \texttt{WRS} on weights with Zipf's law coefficients from this interval takes less than 2 seconds to calculate all $p_i$ on the $\alpha$ parameter with the highest $\lambda$-decentralisation.

\section{Conclusions}
The selection of committees in distributed ledger systems plays a pivotal role in shaping scalability, security, and decentralization of these systems. This paper has delved deep into the crucial need for a robust and equitable committee selection process to uphold the system's integrity. Our focus has centered on cryptographic sortition, which ensures a fixed committee size. In this method, the initial weights of participants—stakes--and the desired committee size serve as input, generating a randomized subset of specified size with weighted elements where these weights translate into voting power.

By introducing innovative sortition algorithms that prioritize fairness and varying levels of decentralization, this paper enriches the existing knowledge base in this field. We have formally defined two fundamental properties crucial for such selections: fairness, ensuring a balanced representation proportional to initial weights, and decentralization, curbing the influence of minor stakeholders to prevent disproportionate voting power. While fairness is a binary measure, decentralization is quantitatively assessed. Our research demonstrates that adversaries with limited total weight cannot dominate any committee, with the permissible stake of adversaries directly linked to the level of decentralization.

We have presented four cryptographic sortition algorithms that address fairness concerns and devised computational approaches to quantify their decentralization. Through numerical experiments conducted on simulated yet realistic data inspired by DLTs, we have showcased the efficacy of these algorithms. Among them, the Stitch algorithm offers reasonable decentralization but struggles with highly skewed initial weight distributions. The Cumulative Rejection Sampling algorithm excels only in cases of equal initial weights, while the Representative Electoral College and Weighted Rejection Sampling algorithms stand out for their superior decentralization in scenarios of balanced and imbalanced initial weights, respectively.

Looking ahead, our future endeavors will focus on developing an algorithm that maximizes decentralization under fairness constraints. The practical application of our theoretical findings and algorithmic solutions holds immense potential in Proof-of-Stake-based DLTs, paving the way for enhanced system efficiency and integrity.

\section*{\added{Acknowledgments}}

\added{During the preparation of this work the authors used ChatGPT in order to enhance the English language and readability of the paper. After using this tool/service, the authors reviewed and edited the content as needed and take full responsibility for the content of the publication.}

\bibliographystyle{elsarticle-num}
\bibliography{bib}
\end{document}
\endinput